\newcommand{\eps}{\varepsilon}
\newtheorem{theorem}{Theorem}
\newtheorem{definition}[theorem]{Definition}
\newproof{proof}{Proof}
\journal{a journal.}
\begin{document}

\begin{frontmatter}

\title{A Note on Undecidability of Observation Consistency\\ for Non-Regular Languages}

\author{Tom{\' a}{\v s}~Masopust}
  \ead{masopust@math.cas.cz}
  \address{Institute of Mathematics, Academy of Sciences of the Czech Republic\\ {\v Z}i{\v z}kova 22, 616 62 Brno, Czech Republic}
\cortext[cor1]{Corresponding author at: Institute of Mathematics, Academy of Sciences of the Czech Republic, {\v Z}i{\v z}kova 22, 616 62 Brno, Czech Republic, Tel.~+420532290376, Fax.~+420541218657 }

\begin{abstract}
  One of the most interesting questions concerning hierarchical control of discrete-event systems with partial observations is a condition under which the language observability is preserved between the original and the abstracted plant. Recently, we have characterized two such sufficient conditions---observation consistency and local observation consistency. In this paper, we prove that the condition of observation consistency is undecidable for non-regular (linear, deterministic context-free) languages. The question whether the condition is decidable for regular languages is open.
\end{abstract}
\begin{keyword}
  Discrete-event system \sep automaton \sep projection \sep observation consistency \sep decidability.
  \MSC 68Q45 \sep 93C65 \sep 93A13 \sep 93B07
\end{keyword}
\end{frontmatter}

\section{Introduction}
  The main issue in supervisory control of discrete-event systems~\cite{CL08} is the state-space explosion problem inherent to large systems, which makes the standard techniques that compute and use the whole system model very difficult and often impossible to use. Methods how to decrease the complexity are intensively studied in the literature. Modular control and hierarchical control are the most successful approaches known so far. These two approaches are complementary because the modular approach can be understood as a horizontal modularity, while the hierarchical approach can be understood as a vertical modularity. The best known results are achieved when the two approaches are combined~\cite{KS}. 
  
  During the last few decades, hierarchical control of discrete-event systems with complete observations has widely been investigated. Several important concepts---the {\em observer property\/} \cite{WW96}, {\em output control consistency\/} (OCC) \cite{WZ91}, and {\em local control consistency\/} (LCC) \cite{SB08}---have been proposed and studied. These concepts are sufficient conditions for the high-level synthesis of a nonblocking and optimal supervisor having a low-level implementation. Recently, we have addressed hierarchical control of partially observed discrete-event systems. In \cite{KM10}, we have presented a sufficient condition which ensures that the optimal high-level supervisor with partial observation is implementable in the original low-level plant. However, the condition imposes that all the observable events must be included in the high-level alphabet, which is very restrictive. Later, in \cite{cdc-ecc2011}, we have discussed a weaker, less restrictive condition, and we have introduced two new structural conditions for projections called {\em local observation consistency\/} (LOC) and {\em observation consistency\/} (OC). The latter addresses a certain consistency between observations on the high level and the low level, and the former is an extension of the observer property under partial observations. We have shown that projections which satisfy OC, LOC, LCC, and which are observers are also suitable for the nonblocking least restrictive hierarchical control under partial observation. However, we have left the question whether the conditions are decidable or not open. 
  
  In this paper, we prove that the condition of observation consistency is undecidable for non-regular (linear, deterministic context-free) languages. The motivation to study this case comes from the fact that although supervisory control of discrete-event systems is mostly developed for regular languages, several attempts of its generalization to deterministic context-free languages have appeared in the literature~\cite{gri07,gri10}. However, the fundamental problem whether the condition is decidable for regular languages is still unsolved.

\section{Preliminaries and definitions}
	In this paper, we assume that the reader is familiar with the basic concepts of supervisory control theory \cite{CL08} and automata and formal language theory \cite{salomaa}. For an alphabet $\Sigma$, defined as a finite nonempty set, $\Sigma^*$ denotes the free monoid generated by $\Sigma$, where the unit of $\Sigma^*$, the empty string, is denoted by $\eps$. A {\em language\/} over $\Sigma$ is a subset of $\Sigma^*$. A {\em (natural) projection\/} $P: \Sigma^* \to \Sigma_0^*$, where $\Sigma$ and $\Sigma_0\subseteq \Sigma$ are alphabets, is a homomorphism defined so that $P(a) = \eps$ for $a\in \Sigma\setminus \Sigma_0$, and $P(a) = a$ for $a\in \Sigma_0$. The {\em inverse image} of $P$, denoted by $P^{-1} : \Sigma_0^* \to 2^{\Sigma^*}$, is defined as $P^{-1}(a)=\{s\in \Sigma^* \mid P(s) = a\}$. These definitions can naturally be extended to languages. A string $s\in \Sigma^*$ is a {\em prefix} of a string $w\in \Sigma^*$ if $w=st$, for some $t\in \Sigma^*$. The prefix closure $\overline{L}=\{w\in \Sigma^* \mid \text{there exists } v\in \Sigma^* \text{ such that } wv\in L\}$ of a language $L\subseteq \Sigma^*$ is the set of all prefixes of all its elements. A language $L$ is prefix-closed if $L=\overline{L}$.

  In this paper, the notion of a generator is used to denote an incomplete deterministic finite automaton. A {\em generator\/} $G$ is a quintuple $G=(Q,\Sigma,\delta,q_0,F)$, where $Q$ is a finite set of {\em states}, $\Sigma$ is an {\em input alphabet}, $\delta: Q \times \Sigma \to Q$ is a {\em partial transition function}, $q_0 \in Q$ is the {\em initial state}, and $F\subseteq Q$ is the set of {\em final or marked states}. In the usual way, $\delta$ is extended to a function from $Q \times \Sigma^*$ to $Q$. The language {\em generated\/} by the generator $G$ is defined as the set of all possible strings $G$ can read from the initial state, that is, $L(G) = \{w\in \Sigma^* \mid \delta(q_0,w)\in Q\}$, and the language {\em marked\/} by the generator $G$ is defined as the set of all strings leading $G$ from the initial state to a marked state, that is, $L_m(G) = \{w\in \Sigma^* \mid \delta(q_0,w)\in F\}$. Note that, by definition, $L_m(G)\subseteq L(G)$, and $L(G)$ is always prefix-closed. Moreover, we use the predicate $\delta(q,a)!$ to denote that the transition $\delta(q,a)$ is defined in $G$. 

	Let $L_1\subseteq E_1^*$ and $L_2\subseteq E_2^*$ be two languages. The {\em parallel composition of $L_1$ and $L_2$\/} is defined as the language $L_1\parallel L_2 = P_1^{-1}(L_1) \cap P_2^{-1}(L_2)$. For the corresponding automata definition, the reader is referred to~\cite{CL08}. 

  Let $G$ be a generator over an alphabet $\Sigma$, and let $\Sigma_u\subseteq \Sigma$ be the subset of all uncontrollable events. A language $K\subseteq\Sigma^*$ is {\em controllable} with respect to $L(G)$ and $\Sigma_u$ if $\overline{K}\Sigma_u\cap L(G)\subseteq \overline{K}$. Moreover, $K$ is $L_m(G)$-closed if $K = \overline K \cap L_m(G)$. Furthermore, let $\Sigma_c=\Sigma\setminus\Sigma_u$ be the subset of all controllable events, and let $\Sigma_o\subseteq\Sigma$ be the set of all observable events with $P$ as the corresponding projection from $\Sigma^*$ to $\Sigma_o^*$. The language $K\subseteq L(G)$ is {\em observable\/} with respect to $L(G)$, $\Sigma_o$, and $\Sigma_c$ if for all $s,s'\in L(G)$ such that $P(s)=P(s')$ and for all $e\in \Sigma_c$, $(se \in L(G) \wedge s'e \in \overline{K} \wedge s \in \overline{K}) \Rightarrow se \in \overline{K}$. Algorithms for these properties can be found in~\cite{CL08}.

  Given a system $G$ over an alphabet $\Sigma$ and a specification language $K\subseteq L_m(G)$, the aim of supervisory control is to find a nonblocking supervisor $S$ such that the closed-loop system $S/G$ satisfies the specification and is nonblocking, that is, $\overline{L_m(S/G)} = L(S/G)=K$; as these notions are not important for the understanding of this paper, we do not discuss them here and refer the reader to~\cite{CL08,Won04} for more details. We only note that it is known that such a supervisor exists if and only if $K$ is controllable with respect to $L(G)$ and $\Sigma_u$, $L_m(G)$-closed, and observable with respect to $L(G)$, $\Sigma_o$, and $\Sigma_c$.

\section{Observation consistency}
  Recently, we have studied the problem of an existence of supervisors under partial observation based on the computation of abstractions. In this framework, the plant is represented as a generator $G$ over an alphabet $\Sigma$ and it is desired to realize a high-level specification $K \subseteq \Sigma_{hi}^*$, where $\Sigma_{hi}\subset\Sigma$ is a high-level alphabet. Our recent result is recalled below as Theorem~\ref{thm3}. 
  
  For projections and abstractions, we use the following notations:
  $P:\Sigma^* \to \Sigma^*_o$,
  $A:\Sigma^* \to (\Sigma_{hi})^*$,
  $P_{hi}:(\Sigma_{hi})^{*} \to (\Sigma_{hi}\cap \Sigma_o)^*$, and
  $A_o:\Sigma_o^* \to (\Sigma_{hi}\cap \Sigma_o)^*$ as illustrated in the commutative diagram in Figure~\ref{projections}.
  \begin{figure}[ht]
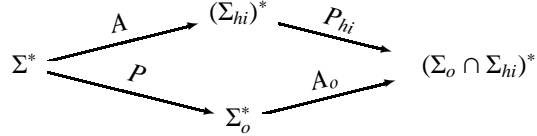

    \begin{diagram}[height=1em,width=4em]
                &           & (\Sigma_{hi})^* &                & \\
                & \ruTo^{A} &                 & \rdTo^{P_{hi}} & \\
      \Sigma^*  &           &                 &                & \relax \qquad (\Sigma_o\cap \Sigma_{hi})^* \\
                & \rdTo^{P} &                 & \ruTo^{A_{o}}  & \\
                &           & \Sigma_{o}^*    &                & \\
    \end{diagram}
    \caption{Commutative diagram of abstractions and projections.}
    \label{projections}
  \end{figure}

  \begin{definition}[Observation consistency]\label{def:consistency}
    A language $L = \overline L \subseteq \Sigma^*$ is said to be {\em observation consistent} with respect to projections $A$, $P$, and $P_{hi}$ if for all strings $t,t' \in A(L)$ such that $P_{hi}(t) = P_{hi}(t')$, there exist strings $s,s' \in L$ such that $A(s) = t$, $A(s') = t'$, and $P(s) = P(s')$.
  \end{definition}  
  
  Thus, observation consistency requires that any two strings that have the same observation in the abstracted high-level plant have also corresponding strings with the same observation in the original low-level plant.

  The other condition required for the next theorem is the local observation consistency.
  \begin{definition}[Local observation consistency]\label{def:loc}
    A language $L  = \overline L \subseteq \Sigma^*$ is said to be {\em locally observation consistent} with respect to projections $A$, $P$, and the set of controllable events $\Sigma_c$ if for all strings  $s,s'\in L$ and events $e\in \Sigma_c \cap \Sigma_{hi}$ such that $A(s)e \in A(L)$, $A(s')e \in A(L)$, and $P(s) = P(s')$, there exist $u,u'\in (\Sigma\setminus \Sigma_{hi})^*$ such that $P(u) = P(u')$ and $sue \in L$ and $s'u'e\in L$.
  \end{definition}
  
  Finally, recall that two languages $L_1\subseteq \Sigma_1*$ and $L_2 \subseteq \Sigma_2^*$ are {\em synchronously nonconflicting\/} if $\overline{L_{1}\parallel L_{2}} = \overline{L_{1}} \parallel \overline{L_{2}}$.
  \begin{theorem}[\cite{cdc-ecc2011}]\label{thm3}
    Let $G$ be a generator over an alphabet $\Sigma$, and let $K\subseteq A(L(G))$ be a high-level specification. Assume that $L(G)$ is observation consistent with respect to projections $A$, $P$, and $P_{hi}$, that $K$ and $L(G)$ are synchronously nonconflicting, and that $L(G)$ is locally observation consistent with respect to $A$, $P$, and $\Sigma_{c}$. Then, the language $K$ is observable with respect to $A(L(G))$, $\Sigma_{hi} \cap \Sigma_o$, and $\Sigma_{hi} \cap \Sigma_c$ if and only if the language $K\parallel L(G)$ is observable with respect to $L(G)$, $\Sigma_o$, and $\Sigma_c$.
  \end{theorem}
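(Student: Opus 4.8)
The plan is to prove the two implications separately, after recording a few normalizations. Write $L=L(G)$, which is prefix-closed, so $A(L)$ is prefix-closed as well and hence $\overline{K}\subseteq A(L)$. Because $K$ and $L$ are synchronously nonconflicting and $L$ is prefix-closed, $\overline{K\parallel L}=\overline{K}\parallel L=A^{-1}(\overline{K})\cap L$; thus for any $s\in\Sigma^*$ one has $s\in\overline{K\parallel L}$ if and only if $s\in L$ and $A(s)\in\overline{K}$. Finally, the commutative diagram gives $P_{hi}\circ A=A_o\circ P$, so $P(s)=P(s')$ always implies $P_{hi}(A(s))=P_{hi}(A(s'))$. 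These three facts will be used repeatedly.

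For the direction ``$K$ observable $\Rightarrow$ $K\parallel L$ observable'', I would take $s,s'\in L$ with $P(s)=P(s')$ and $e\in\Sigma_c$ satisfying $se\in L$, $s'e\in\overline{K\parallel L}$, and $s\in\overline{K\parallel L}$, and show $se\in\overline{K\parallel L}$. By the characterization above it suffices to prove $A(se)\in\overline{K}$. If $e\notin\Sigma_{hi}$ then $A(se)=A(s)\in\overline{K}$ and we are done. If $e\in\Sigma_{hi}$, then $e\in\Sigma_{hi}\cap\Sigma_c$, and setting $t=A(s)$, $t'=A(s')$ I would verify the hypotheses of high-level observability: $t,t'\in A(L)$; $P_{hi}(t)=P_{hi}(t')$ by the diagram; $te=A(se)\in A(L)$ since $se\in L$; and $t'e=A(s'e)\in\overline{K}$, $t=A(s)\in\overline{K}$ from the characterization. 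High-level observability then yields $te\in\overline{K}$, i.e.\ $A(se)\in\overline{K}$. Note that this direction uses only the synchronous nonconflicting hypothesis, not OC or LOC.

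The converse direction, ``$K\parallel L$ observable $\Rightarrow$ $K$ observable'', is where I expect the real work, and where OC and LOC enter. Starting from $t,t'\in A(L)$ with $P_{hi}(t)=P_{hi}(t')$ and $e\in\Sigma_{hi}\cap\Sigma_c$ such that $te\in A(L)$, $t'e\in\overline{K}$, and $t\in\overline{K}$, I first apply observation consistency to the pair $t,t'$ to obtain $s,s'\in L$ with $A(s)=t$, $A(s')=t'$, and $P(s)=P(s')$. The obstacle is that nothing guarantees $se\in L$: OC lifts the observationally matched high-level strings but not the ability to fire $e$ at the low level. This is exactly the gap closed by local observation consistency: since $A(s)e=te\in A(L)$, $A(s')e=t'e\in A(L)$ (using $\overline{K}\subseteq A(L)$), and $P(s)=P(s')$, LOC provides $u,u'\in(\Sigma\setminus\Sigma_{hi})^*$ with $P(u)=P(u')$ and $sue,s'u'e\in L$.

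It then remains to apply low-level observability to the strings $su$ and $s'u'$ with event $e$. Using $A(u)=A(u')=\eps$ and $P(u)=P(u')$ I would check: $su,s'u'\in L$ as prefixes of strings in $L$; $P(su)=P(s'u')$; $sue\in L$; $su\in\overline{K\parallel L}$ because $A(su)=t\in\overline{K}$; and $s'u'e\in\overline{K\parallel L}$ because $A(s'u'e)=t'e\in\overline{K}$ and $s'u'e\in L$. Low-level observability then gives $sue\in\overline{K\parallel L}$, whence $A(sue)=te\in\overline{K}$, which is the desired conclusion. The main subtlety throughout is the careful tracking of the $A$- and $P$-images of the extended strings so that every precondition of the observability definitions is met; the conceptual crux is recognizing that OC handles the ``state-matching'' part while LOC handles the ``event-enabling'' part.
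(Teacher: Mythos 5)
The paper contains no proof of Theorem~\ref{thm3}: it is recalled verbatim from \cite{cdc-ecc2011} as background for the undecidability result, so there is no in-paper argument to compare yours against. Judged on its own, your proof is correct and complete: the normalization $\overline{K\parallel L(G)}=A^{-1}(\overline{K})\cap L(G)$ is valid given the synchronous nonconflicting hypothesis and prefix-closedness of $L(G)$, the identity $P_{hi}\circ A=A_o\circ P$ is exactly the commutativity of Figure~\ref{projections}, and in each direction every precondition of the relevant observability definition is verified before it is invoked. Your division of labor is also the right one conceptually---the high-level-to-low-level implication needs only nonconflictingness, while the converse uses observation consistency to produce observationally matched low-level strings $s,s'$ and local observation consistency to supply the suffixes $u,u'$ that make the controllable event $e$ fireable---which matches the roles the paper itself ascribes to OC and LOC in the discussion preceding the theorem.
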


\section{Main Result}
  In this section, we prove that if the plant language $L(G)$ is a non-regular language, even though it is only a linear, deterministic context-free language, the observation consistency condition is undecidable. For the definitions of linear and deterministic context-free languages, the reader is referred to~\cite{salomaa}.
  
  \begin{theorem}
    The observation consistency condition for linear, deterministic context-free languages is undecidable.
  \end{theorem}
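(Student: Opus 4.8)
The plan is to prove undecidability by reduction from a known undecidable problem about context-free languages. The most natural candidate is the Post Correspondence Problem (PCP), or equivalently the problem of deciding whether two linear (deterministic) context-free languages have nonempty intersection, or whether a deterministic context-free language equals a given regular set. Since observation consistency is a statement quantified over pairs of high-level strings with equal high-level observation, asserting the existence of low-level witnesses with equal low-level observation, the property has exactly the shape of a language-inclusion/intersection assertion. So I expect the cleanest route is to encode an instance of PCP so that observation consistency holds if and only if the PCP instance has no solution.

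First I would fix the alphabets and projections cleverly. Given a PCP instance with pairs $(u_i,v_i)$ over some alphabet, I would build a linear deterministic context-free language $L$ over a composite alphabet $\Sigma$ whose symbols are partitioned into the high-level alphabet $\Sigma_{hi}$, the observable alphabet $\Sigma_o$, and the remaining ``internal'' events. The idea is to make each string of $L$ encode a candidate PCP solution index sequence, together with a spelled-out copy of the two concatenations $u_{i_1}\cdots u_{i_k}$ and $v_{i_1}\cdots v_{i_k}$; linearity comes from generating the index sequence in the middle and the two concatenations growing outward, which is the standard way to get a linear (indeed deterministic) grammar for the ``matching'' language. I would arrange the projections $A$ and $P$ and $P_{hi}$ so that the high-level abstraction $A$ records only the index sequence (so $A(L)$ is essentially all index sequences that appear), $P_{hi}$ collapses two high-level strings precisely when they are the two halves of a potential solution, and $P$ on the low level records the spelled-out content of the words, so that $P(s)=P(s')$ can hold only when the $u$-concatenation and the $v$-concatenation actually coincide as strings.

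With such an encoding, the existential clause of Definition~\ref{def:consistency} — for every pair $t,t'\in A(L)$ with $P_{hi}(t)=P_{hi}(t')$ there exist low-level witnesses $s,s'$ with $A(s)=t$, $A(s')=t'$ and $P(s)=P(s')$ — becomes the assertion ``every matched pair of index sequences is realizable by low-level strings whose contents agree,'' and the only way it can fail is exactly when a genuine PCP solution forces two high-level strings to be $P_{hi}$-equivalent while their spelled-out $u$- and $v$-contents are forced to differ under $P$. Thus I would aim for the equivalence: the PCP instance has a solution $\iff$ $L$ is \emph{not} observation consistent. Since $L$ can be taken linear and deterministic context-free and the reduction is clearly computable, undecidability of PCP transfers to observation consistency, giving the theorem.

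The main obstacle I anticipate is the simultaneous control of three projections on one language while keeping $L$ both linear and deterministic context-free. It is easy to make the intersection/matching condition context-free, but forcing determinism and linearity at the same time, and making sure the three projections $A$, $P$, $P_{hi}$ commute as in Figure~\ref{projections} and isolate exactly the intended information (index sequence for $A$, pairing for $P_{hi}$, spelled content for $P$), will require careful bookkeeping with marker symbols and a disjoint-copy trick so that the two halves $s,s'$ never interfere. A secondary delicate point is the direction of the reduction: I must ensure the \emph{trivial} pairs $t=t'$ never cause a spurious failure (they are always witnessed by $s=s'$), so the only obstruction to observation consistency is a nontrivial matched pair, i.e.\ an actual PCP solution; getting that biconditional exactly right, rather than merely one implication, is where I would spend the most care.
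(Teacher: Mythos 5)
Your overall strategy (reduction from PCP, strings carrying an index sequence plus spelled-out concatenations, reversed contents to keep the language linear and deterministic) matches the paper's, but the heart of your reduction is set up backwards, and as described it does not work. You let $A$ record the index sequence, let $P_{hi}$ identify the $u$-side and $v$-side copies of the same index sequence, and let $P$ record the spelled-out content. Observation consistency then reads: for \emph{every} index sequence, the $u$-side and $v$-side realizations admit witnesses with equal $P$-image, i.e., $u_{i_1}\cdots u_{i_k} = v_{i_1}\cdots v_{i_k}$ for \emph{every} sequence $i_1\ldots i_k$. Taking $k=1$, this is just $u_i=v_i$ for all $i$ --- a trivially decidable condition, not PCP. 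Moreover, your stated target equivalence (``PCP has a solution $\iff$ $L$ is \emph{not} observation consistent'') is the opposite of what your encoding produces: a PCP solution makes the two contents \emph{agree}, hence it supplies the witnesses and supports consistency for that pair; failure of consistency under your encoding corresponds instead to some index sequence \emph{not} being a solution. So your construction establishes neither implication of your claimed biconditional, and the property it actually encodes is decidable.

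The missing idea is to trivialize the outer universal quantifier of Definition~\ref{def:consistency} rather than let it range over all index sequences. In the paper, $A$ erases everything except two marker symbols ($@$ for the $w$-side, $\#$ for the $u$-side), so $A(\overline L)$ is the \emph{finite} set $\{\eps,@,@@,\#\}$, and since $\Sigma_{hi}\cap\Sigma_o=\emptyset$ the projection $P_{hi}$ collapses all of these to $\eps$; meanwhile $P$ keeps both the indices and the letters. Observation consistency then amounts to finitely many witness-existence checks, and the critical pair $t=@@$, $t'=\#$ --- which forces both witnesses to be \emph{complete} strings of the two respective forms --- admits witnesses with $P(s)=P(s')$ precisely when some index sequence satisfies $w_{i_1}\cdots w_{i_k}=u_{i_1}\cdots u_{i_k}$, i.e., precisely when the PCP instance is solvable; the remaining pairs are easily witnessed once a solution exists. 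This yields ``observation consistency holds $\iff$ PCP is solvable'' (note the positive direction), and hence undecidability. The quantifier alignment is the crux: PCP's existential must sit inside OC's inner existential (the witnesses), not at the level of choosing the pair $(t,t')$, because with your content-recording $P$ a genuine solution produces agreeing witnesses rather than an unwitnessable pair.
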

  \begin{proof}
    We prove the theorem by reduction of Post's Correspondence Problem (PCP) to the problem of observation consistency. Recall that PCP is the problem whether, given two finite sets $A=\{w_1,w_2,\ldots,w_n\}$ and $B=\{u_1,u_2,\ldots,u_n\}$ of $n$ strings over an alphabet $\Sigma$, there exists a sequence of indices $i_1 i_2 \ldots i_k$, for $k\ge 1$, such that $w_{i_1} w_{i_2} \ldots w_{i_k} = u_{i_1} u_{i_2} \ldots u_{i_k}$. It is well-known that PCP is undecidable~\cite{post}.
    
    Let $\{w_1,w_2,\ldots,w_n\}$ and $\{u_1,u_2,\ldots,u_n\}$ be an instance of PCP over an alphabet $\Sigma$ such that for all $i=1,2,\ldots, n$, we have $w_i\neq u_i$. Let $E=\{1,2,\ldots,n\}$ be a new alphabet, that is, $E\cap\Sigma=\emptyset$. We use the notation $w^R$ to denote the reversal or mirror image of a string $w\in\Sigma^*$. Define the language 
    \[
      L=\{@i_1i_2\ldots i_m\$w_{i_m}^R\ldots w_{i_2}^Rw_{i_1}^R@ \mid m\ge 1\}\cup
         \{i_1i_2\ldots i_m\$u_{i_m}^R\ldots u_{i_2}^Ru_{i_1}^R\# \mid m\ge 1\}\,.
    \]
    Note that this language is linear and deterministic context-free, and it is also not hard to see that the language $\overline{L}$ is linear and deterministic context-free, too. The linearity is obvious from the form of the words, and a deterministic pushdown automaton works so that based on $@$ it distinguishes the two parts of the language, and then it pushes the indices to the pushdown and after reading $\$$ it pops indices from the pushdown which tells the automaton what strings should be read from the input. 
    
    Finally, we define the abstraction $A:(\Sigma\cup\{@,\#,\$\}\cup E)^*\to \{@,\#\}^*$ and the projection $P:(\Sigma\cup\{@,\#,\$\}\cup E)^*\to (\Sigma\cup E)^*$. Now, we prove that PCP has a solution if and only if the language $\overline{L}$ satisfies the observation consistency condition. Note that from the definition of the abstraction and projection, it follows that for any two strings $t,t'\in A(\overline{L})=\{@,@@,\#,\eps\}$, it holds that $P_{hi}(t)=\eps=P_{hi}(t')$.
    
    Assume that PCP has a solution, say $i_1i_2\ldots i_k$ with $w_{i_1}w_{i_2}\ldots w_{i_k}=u_{i_1}u_{i_2}\ldots u_{i_k}$. Then, if $t=t'$, there exists $s=s'$ such that $A(s)=A(s')=t=t'$ and, obviously, $P(s)=P(s')$. Thus, assume that $t\neq t'$. We have six possibilities for $t$ and $t'$, namely
    \begin{enumerate}
      \item $t=@$ and $t'=@@$: In this case, set $s=@1\$w_1^R$ and $s'=@1\$w_1^R@$. Then, $A(s)=@$, $A(s')=@@$, and $P(s)=1w_1^R=P(s')$ as required.
      
      \item $t=@$ and $t'=\#$: Set $s=@i_1i_2\ldots i_k\$w_{i_k}^R\ldots w_{i_2}^Rw_{i_1}^R$ and $s'=i_1i_2\ldots i_k\$u_{i_k}^R\ldots u_{i_2}^Ru_{i_1}^R\#$. Then, $A(s)=@$, $A(s')=\#$, and $P(s)=i_1i_2\ldots i_kw_{i_k}^R\ldots w_{i_2}^Rw_{i_1}^R=i_1i_2\ldots i_ku_{i_k}^R\ldots u_{i_2}^Ru_{i_1}^R=P(s')$.
      
      \item $t=@$ and $t'=\eps$: Set $s=@i_1i_2\ldots i_k\$w_{i_k}^R\ldots w_{i_2}^Rw_{i_1}^R$ and $s'=i_1i_2\ldots i_k\$u_{i_k}^R\ldots u_{i_2}^Ru_{i_1}^R$. Then, $A(s)=@$, $A(s')=\eps$, and $P(s)=i_1i_2\ldots i_kw_{i_k}^R\ldots w_{i_2}^Rw_{i_1}^R=i_1i_2\ldots i_ku_{i_k}^R\ldots u_{i_2}^Ru_{i_1}^R=P(s')$.
      
      \item $t=@@$ and $t'=\#$: Set $s=@i_1i_2\ldots i_k\$w_{i_k}^R\ldots w_{i_2}^Rw_{i_1}^R@$ and $s'=i_1i_2\ldots i_k\$u_{i_k}^R\ldots u_{i_2}^Ru_{i_1}^R\#$. Then, $A(s)=@@$, $A(s')=\#$, and $P(s)=i_1i_2\ldots i_kw_{i_k}^R\ldots w_{i_2}^Rw_{i_1}^R=i_1i_2\ldots i_ku_{i_k}^R\ldots u_{i_2}^Ru_{i_1}^R=P(s')$.
      
      \item $t=@@$ and $t'=\eps$: Set $s=@i_1i_2\ldots i_k\$w_{i_k}^R\ldots w_{i_2}^Rw_{i_1}^R@$ and $s'=i_1i_2\ldots i_k\$u_{i_k}^R\ldots u_{i_2}^Ru_{i_1}^R$. Then, $A(s)=@@$, $A(s')=\eps$, and $P(s)=i_1i_2\ldots i_kw_{i_k}^R\ldots w_{i_2}^Rw_{i_1}^R=i_1i_2\ldots i_ku_{i_k}^R\ldots u_{i_2}^Ru_{i_1}^R=P(s')$.
      
      \item $t=\#$ and $t'=\eps$: Set $s=1\$w_1^R\#$ and $s'=1\$w_1^R$. Then, $A(s)=\#$, $A(s')=\eps$, and $P(s)=1w_1^R=P(s')$.
    \end{enumerate}
    Thus, we have shown that if PCP has a solution, the language $\overline{L}$ satisfies the observation consistency condition.
    
    On the other hand, assume that the instance of PCP has no solution. Then, we prove that for $t=@@$ and $t'=\#$, there are no $s$ and $s'$ in $\overline{L}$ such that $A(s)=@@$, $A(s')=\#$, and $P(s)=P(s')$, that is, that the language $\overline{L}$ does not satisfy the observation consistency condition. For the sake of contradiction, assume that there exist such $s$ and $s'$ in $\overline{L}$. Let $s$ be of a form $@i_1i_2\ldots i_k\$w_{i_k}^R\ldots w_{i_2}^Rw_{i_1}^R@$ and $s'$ be of a form $j_1j_2\ldots j_{k'}\$u_{j_{k'}}^R\ldots u_{j_2}^Ru_{j_1}^R\#$, which are the only forms of strings with abstractions $@@$ and $\#$, respectively. Then, by our assumption, $A(s)=@@$, $A(s')=\#$, and $P(s)=i_1i_2\ldots i_kw_{i_k}^R\ldots w_{i_2}^Rw_{i_k}^R=j_1j_2\ldots j_{k'}u_{i_{k'}}^R\ldots u_{i_2}^Ru_{i_1}^R=P(s')$. However, this means that $i_1i_2\ldots i_k=j_1j_2\ldots j_{k'}$, which implies that $k=k'$ and $i_z=j_z$ for $1\le z\le k$, and $w_{i_k}^R\ldots w_{i_2}^Rw_{i_1}^R=u_{i_k}^R\ldots u_{i_2}^Ru_{i_1}^R$, which means that $w_{i_1}w_{i_2}\ldots w_{i_k}=u_{i_1}u_{i_2}\ldots u_{i_k}$. But this is a solution of our instance of PCP, namely the sequence $i_1 i_2 \ldots i_k$, and it is a contradiction. Thus, there are no such strings $s$ and $s'$ for $t=@@$ and $t'=\#$. Hence, the instance of PCP has a solution if and only if the language $\overline{L}$ satisfies the observation consistency condition, which means that observation consistency is undecidable for linear, deterministic context-free languages.
  \qed\end{proof}
  
\section{Conclusion}\label{conclusion}
  In this paper, we have shown that if the language is linear, deterministic context-free, then the observation consistency condition is undecidable. However, it needs to be mentioned that no algorithm is known to decide the observation consistency condition even for regular languages. More specifically, it is an open problem whether the condition of observation consistency is decidable for regular languages. This condition is of great interest in hierarchical control with partial observation, and so is the decidability problem. Moreover, if it is proven undecidable, a stronger condition that implies observation consistency, is decidable, and does not imply that all observable events must be included in the high-level alphabet is of great interest. 

\section*{Acknowledgments}
  The research received funding from the European Community's Seventh Framework Programme under grant agreement no. INFSO-ICT-224498, from the Academy of Sciences of the Czech Republic, Institutional Research Plan no. AV0Z10190503, and from the GA{\v C}R grant no. P202/11/P028.

\bibliographystyle{model1-num-names}
\bibliography{biblio}

\end{document}